\documentclass[journal,twoside,web]{ieeecolor}
\usepackage{lcsys}

\usepackage{multicol}
\usepackage[bookmarks=true]{hyperref}
\usepackage{lipsum}
\usepackage[caption=false,font=footnotesize]{subfig}

\usepackage{amssymb} 

\usepackage{bm}

\usepackage{enumerate}
\usepackage{commath}
\usepackage{graphicx}
\usepackage{amsmath}
\usepackage{mathtools}

\usepackage{breqn}
\usepackage{booktabs}
\usepackage{empheq}
\usepackage{amsfonts}

\usepackage{amsthm}
\usepackage{mathrsfs}
\usepackage{accents}
\usepackage{thmtools}
\usepackage{thm-restate}
\usepackage{float}
\usepackage[normalem]{ulem}

\usepackage{algorithm}
\usepackage{algpseudocode}
\usepackage{dblfloatfix}

\usepackage{etoolbox}
\usepackage{float}  

\usepackage{comment}

\usepackage{calc}
\newlength{\mywidth}

\theoremstyle{plain}

\newtheorem{lemma}{Lemma}

\newtheorem{proposition}{Proposition}

\newtheorem*{theorem*}{Theorem}
\newtheorem{assumption*}{Assumption}

\declaretheorem[name=Theorem]{thm}

\theoremstyle{definition}
\newtheorem{example}{Example}
\newtheorem{remark}{Remark}
\newtheorem{definition}{Definition}
\newtheorem{assumption}{Assumption}

\newtheorem*{problem*}{Problem}

\newcommand{\R}{\mathbb{R}}

\newcommand{\myvar}[1]{#1}

\newcommand{\myset}[1]{\mathcal{#1}}

\DeclareMathOperator*{\argmin}{argmin}

\newenvironment{examplecont}[1]
  {\par\medskip\noindent\textbf{Example~#1 Continued.}\normalfont}
  {\par\medskip}

\title{A Duality-Based Optimization Formulation of Safe Control Design with State Uncertainties}

\author{
Xiao Tan$^*$, Rahal Nanayakkara$^*$, Paulo Tabuada,  and Aaron D. Ames
\thanks{This work is supported by TII under project \#A6847.}
\thanks{Xiao Tan and Aaron D. Ames are with the the Department of Mechanical and Civil Engineering, California Institute of Technology, Pasadena, CA 91125, USA (Email: {\tt\small xiaotan,  ames@caltech.edu}).}    
\thanks{Rahal Nanayakkara and Paulo Tabuada are with the Department of Electrical and Computer Engineering at University of California, Los Angeles, CA 90095, USA (Email: {\tt\small  rahaln@g.ucla.edu, tabuada@ucla.edu}).} 
\thanks{$^*$ Equal contributions.}
}

\begin{document}

\maketitle
\thispagestyle{empty}
\pagestyle{empty}

\begin{abstract}
State estimation uncertainty is prevalent in real-world applications, hindering the application of safety-critical control. 
Existing methods address this by strengthening a Control Barrier Function (CBF) condition either to handle actuation errors induced by state uncertainty, or to enforce stricter, more conservative sufficient conditions.
In this work, we take a more direct approach and formulate a robust safety filter by analyzing the image of the set of all possible states under the CBF dynamics.
We first prove that convexifying this image set does not change the set of possible inputs.
Then, by leveraging duality, we propose an equivalent and tractable reformulation for cases where this convex hull can be expressed as a polytope or ellipsoid.
Simulation results show the approach in this paper to be less conservative than existing alternatives.
\end{abstract}

\begin{IEEEkeywords}
Safety-critical control, Robust control, Uncertain systems, Optimization
\end{IEEEkeywords}

\section{Introduction}
\IEEEPARstart{S}{afety} is a crucial requirement in modern control applications. 
Formally, safety is a property that requires all trajectories of the system to remain within a prescribed safe set. 
Recently, Control Barrier Functions (CBF)~\cite{Ames2017} have emerged as a convenient tool to enforce safety guarantees for control systems
by constructing an inequality whose satisfaction implies the assurance of system safety.
Furthermore, for systems which are affine in the control input, the CBF inequality reduces to a linear constraint on the input, allowing fast and easy computation of safe control inputs in real time.

Despite its effectiveness, the standard CBF framework assumes perfect knowledge of system dynamics and state information. However, in practical applications, state uncertainty may arise from sensor degradation, environmental interference, or ill-calibrated state estimation algorithms. Fortunately, advances in observer design provide not only state estimates, but also deterministic bounds on the estimation error \cite{alamo2005zonotope, jaulin2002estimation, silvestre2024nonlinear}.
This paper addresses the problem of finding a safe control input given 
a set of states guaranteed to contain the true state, which can be obtained using these observer bounds.

Enforcing safety under these conditions requires satisfying the CBF inequality over the entire set of possible states. 
Measurement Robust CBFs (MR-CBF)\cite{cosner2021measurement} achieve this by bounding each state dependent coefficient appearing in the CBF inequality via their respective Lipschitz constants.
However, this method suffers from being overly conservative and the feasibility of such constraints is largely left unanswered.
This MR-CBF framework is generalized to include uncertainty in the dynamics in \cite{lindemann2024learning}, and expanded via interval analysis with higher-order Taylor models in \cite{zhang2022control}.
The result in \cite{agrawal2022safe}, uses a similar technique but also assumes that the control effectiveness gain in the CBF inequality does not change sign, heavily restricting the applicability of this method.

An alternative perspective considers the effect of state uncertainty on the closed loop control law, providing guarantees against the resulting actuation errors \cite{jankovic2018robust, nanayakkara2025safety, kolathaya2018input, das2025safe}. 
We discuss this further in Section \ref{sct:dynamic_uncertainty}. 
For stochastic state uncertainties, works like \cite{cosner2023robust, ramadan2024control, do2024probabilistically} provide probabilistic safety guarantees, although we do not make direct comparisons as our approach focuses on the deterministic setting.

In this work, we take a direct approach to the safe control problem under state uncertainty by
designing an optimization-based safety filter that enforces a CBF constraint across the entire set of possible states. 
This is naturally formulated as a robust optimization problem.
We first show that replacing the image set of the map from the state uncertainty to the CBF coefficients with its convex hull does not change the set of feasible solutions for the robust optimization problem.
We next focus on the case where this convex hull can be expressed as a polytope or ellipsoid, and propose a duality-based reformulation that is both equivalent to the original problem and efficient to solve online.
In contrast to prior works such as \cite{cosner2021measurement}, which rely on Lipschitz constants to provide a simple, albeit coarse, over-approximation of the uncertainty set, 
our approach accurately computes this required set resulting in a significantly less conservative formulation.

\section{Preliminaries}

\textit{Notation} : We denote by $\| x\|$ the Euclidean norm of $x \in \R^n $. 
For an essentially bounded signal $d(t)$, we denote its $\mathcal{L}_\infty$ norm by $\|d\|_{\infty} = \operatorname{ess\,sup}_{t \ge 0} \|d(t)\|$.
For $g:\R^n \to \R^{n \times m}$ and $h: \R^n \to \R$, $L_gh(x) = \nabla h(x) g(x)$.
Given $A, B \subseteq \R^n$, $A \oplus B$ is the Minkowski sum of $A$ and $B$. A continuous function $\alpha : \R_{\geq 0} \to \R_{\geq 0}$ (respectively, $\alpha : \R \to \R$) is said to be of class $\mathcal{K}$ (respectively, extended class $\mathcal K$) if $\alpha(0)=0$ and $\alpha$ is strictly monotonically increasing.

Consider a control-affine system:
\begin{equation} \label{eq:system}
    \dot{\myvar{x}} = f(\myvar{x}) + g(\myvar{x})\myvar{u} ,
\end{equation}
where $\myvar{x}\in \mathbb{R}^n$, $\myvar{u} \in U \subset \mathbb{R}^m$ are the system state and input, respectively. 
We assume that $U$ is a polytope
\footnote{This duality-based technique applies to any $U \subseteq \R^m$, though other sets may make the resulting problem more challenging to solve online.},
and the vector fields $f, g$ are locally Lipschitz continuous. 

\subsection{Control Barrier Functions}
 Suppose that the system state is required to be confined to a region $\myset{C} = \{\myvar{x}\in \mathbb{R}^n: h(\myvar{x})\geq 0\}$, 
 where $h:\mathbb{R}^n \to \mathbb{R}$ is a continuously differentiable function.
 We say that the set $\myset{C}$ is safe if it is forward invariant. One way to enforce this state constraint is through Control Barrier Functions (CBFs).

\begin{definition}[Control Barrier Functions~\cite{Ames2017}]
    Given the nonlinear system in \eqref{eq:system}, a continuously differentiable function $h$ is a CBF if there exists an open set $\myset{D}\supset \myset{C}$ and an extended class $\mathcal{K}$ function $\alpha(\cdot)$ such that
    \footnote{The strict inequality ensures continuity and robustness of the controller. See \cite{jankovic2018robust} for more details.}:
    \begin{equation} \label{eq:cbf_def}
        \sup_{u\in U } 
        \underbrace{L_{f} h (\myvar{x}) + L_{g} h (\myvar{x}) u}_{\dot h(x,u)}  > -\alpha(h(\myvar{x})), \quad \forall \myvar{x}\in \myset{D}.
    \end{equation}
\end{definition}
Any locally Lipschitz control law $u = k(x)$ satisfying:
 \begin{equation} \label{eq:cbf_condition}
     L_{f} h (\myvar{x})+ L_{g} h (\myvar{x}) u  \geq -\alpha(h(\myvar{x})),
 \end{equation}
 renders the set $\myset{C}$ forward invariant \cite{Ames2017}. 
 Given some desired (potentially unsafe) controller $k_d : \R^n \to \R^m$, the minimally invasive safe control law can be computed as in \cite{Ames2017}:
\begin{equation} \label{eq:CBF QP}
    \begin{aligned}
    k(x)  = & ~\underset{u \in U}{\textup{argmin}}
    \left\Vert u - k_d({x})  \right\Vert^{2} \\
    \textup{s.t.} \  &
     L_fh(x) + L_gh(x)u+\alpha(h(x)) \geq 0.
    \end{aligned}
\end{equation}

\subsection{A Dynamic Uncertainty Perspective} \label{sct:dynamic_uncertainty}

In practice, feedback control relies on a state estimate $\hat{x}$ rather than the true state $x$. Given a feedback control law $k : \mathbb R ^n \to \mathbb R ^m$, applying the control input $u = k(\hat{x})$ instead of the ideal $u=k(x)$ introduces a dynamic disturbance \mbox{$d = k(\hat{x}) - k(x)$}, resulting in the closed-loop dynamics:
\begin{equation} \label{eq:disturbed_system}
\dot{\myvar{x}} = f(\myvar{x}) + g(\myvar{x})(k(x) + d).
\end{equation}
When $\hat{x} - x$ is bounded and the controller $k$ is locally Lipschitz, {for any bounded $x$} one concludes that $d$ is bounded. 
Here, $d$ acts as a matched dynamic uncertainty to the nominal closed-loop system \mbox{$\dot{\myvar{x}}  = f(\myvar{x}) + g(\myvar{x})k(x) $}. A well-established safety notion in this setting is called Input-to-State Safety (ISSf) \cite{kolathaya2018input}.

A controller satisfying an ISSf-CBF constraint ensures the forward invariance of an inflated safe set:
\begin{equation} \label{eq:inflated safe set}
    \myset{C}_{d} = \{x\in \mathbb{R}^n: h(\myvar{x}) + \gamma(\| d\|_\infty) \geq 0\},
\end{equation}
under the perturbed dynamics \eqref{eq:disturbed_system}, where $\gamma$ is a class $\mathcal K$ function. Intuitively, an ISSf-CBF guarantees that the system remains within a neighborhood of the original safe set that scales with the magnitude of the disturbance.

 To achieve ISSf style guarantees, the recent work \cite{nanayakkara2025safety} proposes the R-CBF constraint:
\begin{equation} \tag{R-CBF} \label{eq:R-CBF}
L_f h(x) + L_g h(x)u + \alpha(h(x)) \geq 
\gamma_{1}\|L_{g} h(x)\| + \gamma_{2}\|L_{g} h(x)\|^{2},
\end{equation}
with tuning parameters $\gamma_1, \gamma_2 \in \R_{>0}$. The additional terms effectively compensate for the worst-case impact of bounded disturbances.
As established in \cite{nanayakkara2025safety}, $\gamma_1$ and $\gamma_2$ dictate the disturbance threshold determining whether the original safe set $\myset{C}$ or an inflated set \eqref{eq:inflated safe set} is rendered forward invariant. Although recent extensions adapt these parameters online \cite{das2025safe}, it remains unclear how to systematically design $\gamma_1$ and $\gamma_2$ to optimize the closed-loop robustness/safety guarantees when the full state uncertainty set is known.

\begin{example}
    Consider the scalar system from \cite{nanayakkara2025safety}:
    \begin{equation} \label{eq:ex_scalar_system}
        \dot{x} = x(x-1.05)(x+1.05) + (1 - x^2)u,
    \end{equation}
    where $x,u\in \mathbb{R}$, and the CBF is $h(x) = 1 - x^2$ with $\alpha(h(x)) = h(x)$. 
    We study the scenario when the state estimate is $\hat x=1$, but the true state is unknown and lies in $[0.95, 1.05]$.
    
    Now, let us compare several existing results in \cite{agrawal2022safe,cosner2021measurement,nanayakkara2025safety}.
    For \cite{agrawal2022safe}, since $ L_gh(x)$ changes sign when $ x\in [0.95, 1.05]$, the Assumption~2 therein fails and thus the method becomes inapplicable.  
    The work in \cite{cosner2021measurement} imposes the MR-CBF condition: 
    \begin{multline} \tag{MR-CBF}
        L_f h(\hat{x}) + L_gh(\hat{x})u + \alpha(h(\hat{x})) \geq \\
         \epsilon(\hat x)(\mathcal{L}_{L_f h} + \mathcal{L}_{\alpha \circ h} + \mathcal{L}_{L_gh}\| u\| ),
         \label{eq:MR-CBF}
    \end{multline}
    where $\epsilon(\hat x) \geq \|x - \hat x \|$ is the maximum allowable level of state uncertainty, and $\mathcal{L}_{L_f h}, \mathcal{L}_{\alpha \circ h}, \mathcal{L}_{L_gh}$ are the Lipschitz constants of the functions in their respective subscripts.
    This MR-CBF condition turns out to be infeasible in this case, since $L_gh(\hat{x})=0$ and substituting the Lipschitz constants into the inequality reduces to the condition \mbox{$\|u\| < 0$.}
    In contrast, the R-CBF method in \cite{nanayakkara2025safety} is always feasible, but computing the threshold where the guarantees degrade from preserving the original safe set to an inflated one is difficult. 
    Moreover, even when the R-CBF only provides guarantees for the inflated safe set, there may still exist an input $u$ that renders the original set invariant.
    In what follows, we propose a direct approach to analyze the exact conditions under which the CBF constraint is satisfied over the entire set of possible states.
\end{example}

\subsection{Problem Formulation}

We denote by $\myset{B}$ the compact set containing all possible estimation errors, obtained using bounded error observers such as \cite{alamo2005zonotope, jaulin2002estimation, silvestre2024nonlinear}.
Naturally, this set must also contain the true estimation error:
\begin{equation} \label{eq:state uncertainty}
   e = x - \hat{x} \in \myset{B}.
\end{equation}
The set $\myset{B}$ may in fact be a function of both the estimate $\hat x$ and time $t$, (i.e., a time and state dependent error bound) though we do not denote this explicitly as it does not affect our arguments.
Common state uncertainty sets include hyper-rectangles \cite{alamo2005zonotope, jaulin2002estimation} and norm balls \cite{silvestre2024nonlinear}.

In this work, we aim to design a safe feedback controller with knowledge of $\hat{x}$ and $\myset{B} $. Since the exact true state is unknown, one approach to ensure the CBF inequality is satisfied at the true state $x$ is to enforce the constraint in (\ref{eq:CBF QP}) for 
all possible states consistent with the measurement uncertainty, giving rise to the following robust optimization problem: 
\begin{equation} \label{eq:robust CBF QP}
    \begin{aligned}
    & k_{\rm ro}(\hat{x},\myset{B})  = \underset{u \in U}{\textup{argmin}}
    \left\Vert u - k_d(\hat{x})  \right\Vert^{2} \\
    \textup{s.t.} \  &
     L_{f} h(\xi)+L_{g} h(\xi)u +\alpha(h(\xi)) \geq 0, \ \ \forall \xi \in \{\hat{x}\} \oplus \myset{B}.
    \end{aligned}
\end{equation}
 A straightforward result on safety is as follows:

\begin{thm}
    Consider the nonlinear system in \eqref{eq:system} with state estimate $\hat{x}$ and estimation error set $\myset{B}$ in \eqref{eq:state uncertainty}. Suppose that $k_{\rm ro}(\hat{x},\myset{B})$ is well defined and locally Lipschitz for any $x\in \myset{D}\supset \myset{C}$.  Then, for the closed-loop system defined by \eqref{eq:system} with $ u =  k_{\rm ro}(\hat{x},\myset{B}) $, the set $\myset{C}$ is forward invariant.
\end{thm}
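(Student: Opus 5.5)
The argument reduces the theorem to the standard CBF invariance result quoted after \eqref{eq:cbf_condition}: any locally Lipschitz feedback satisfying \eqref{eq:cbf_condition} along the true trajectory renders $\myset{C}$ forward invariant. The only issue is that the controller is evaluated at $\hat{x}$ rather than $x$. So the plan is to show that, whenever $x(t)$ is in $\myset{D}$, the applied input $u(t)=k_{\rm ro}(\hat{x}(t),\myset{B})$ satisfies the CBF inequality at the true state $x(t)$, and then invoke the comparison-lemma argument behind \cite{Ames2017}.

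\emph{Pointwise satisfaction at the true state.} By \eqref{eq:state uncertainty}, $e(t)=x(t)-\hat{x}(t)\in\myset{B}$, so $x(t)=\hat{x}(t)+e(t)\in\{\hat{x}(t)\}\oplus\myset{B}$. Since $k_{\rm ro}$ is assumed well defined, its value is a feasible point of \eqref{eq:robust CBF QP}. The semi-infinite constraint therefore holds for every $\xi$ in that set, and in particular for $\xi=x(t)$. This gives
\[
L_fh(x(t))+L_gh(x(t))\,k_{\rm ro}(\hat{x}(t),\myset{B})+\alpha(h(x(t)))\ge 0 .
\]
That is, $\dot h(x(t),u(t))\ge-\alpha(h(x(t)))$ along the closed-loop trajectory for all $t$ at which $x(t)\in\myset{D}$.

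\emph{Existence and invariance.} The local Lipschitz hypothesis on $k_{\rm ro}$, together with local Lipschitzness of $f,g$, makes the closed loop admit unique solutions on a maximal interval. Let $y(t)=h(x(t))$. Then $\dot y\ge-\alpha(y)$ as long as $x(t)\in\myset{D}$, with $y(0)\ge0$. I would argue by contradiction. Suppose $x$ leaves $\myset{C}$. Because $\myset{D}\supset\myset{C}$ is open, there is a first exit time $t_0$, and an interval $[t_0,t_0+\delta)$ on which $x\in\myset{D}$ while $h(x(t))<0$ somewhere in it. The comparison lemma applied to $\dot y\ge-\alpha(y)$ with $y(t_0)=0$ gives $y(t)\ge\sigma(t)$, where $\sigma$ solves $\dot\sigma=-\alpha(\sigma)$, $\sigma(t_0)=0$. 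Since $\alpha(0)=0$, uniqueness gives $\sigma\equiv0$, provided $\alpha$ is locally Lipschitz, as is standard in \cite{Ames2017}. This yields $y\ge0$ on that interval, a contradiction.

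\emph{Main obstacle.} The main subtlety is interpretive rather than computational: the closed loop depends on the estimator signal $\hat{x}(t)$, so it is not an autonomous feedback $u=k(x)$. I would treat $u(t)=k_{\rm ro}(\hat{x}(t),\myset{B}(t))$ as a time-varying input that is regular enough (locally Lipschitz in the sense assumed) for Carathéodory solutions to exist. The comparison argument then only needs the differential inequality along the trajectory, which the pointwise step supplies. I would state this regularity explicitly as what the hypothesis "locally Lipschitz for any $x\in\myset{D}$" is meant to provide.
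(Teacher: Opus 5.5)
Your proposal is correct and follows the paper's proof. The key step is the same: $x=\hat{x}+e\in\{\hat{x}\}\oplus\myset{B}$, so the robust constraint in \eqref{eq:robust CBF QP} evaluated at $\xi=x$ yields \eqref{eq:cbf_condition} at the true state. The paper then simply cites Proposition~1 of \cite{Ames2017}, whereas you re-derive that result with a first-exit/comparison argument and spell out the time-varying-input reading of $\hat{x}(t)$, which the paper glosses over. Your extra assumption that $\alpha$ be locally Lipschitz is not needed. On any interval where $h(x(t))<0$ you have $\dot h\ge-\alpha(h)>0$, and this directly rules out a first crossing below zero.
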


 \begin{proof}
    For any $x\in \myset{D}$, let its corresponding state estimate be $\hat{x}$ with the estimation error set $\myset{B}$. 
    Since $ k_{\rm ro}(\hat{x},\myset{B}) $ is a solution to \eqref{eq:robust CBF QP}, we know the CBF inequality \eqref{eq:cbf_condition}
    holds for any $x\in \myset{D}$. Thus by Proposition 1 in \cite{Ames2017}, $\myset{C}$ is forward invariant.
\end{proof}

\begin{remark}
    The local Lipschitz continuity of the proposed controller \eqref{eq:robust CBF QP} can be verified under standard regularity assumptions on $k_d$, provided the feasible set of \eqref{eq:robust CBF QP} has a non-empty interior for every $x \in \mathcal D$ (uniform Slater's condition) \cite{facchinei2003finite}.
\end{remark}

One notable issue for the control synthesis in \eqref{eq:robust CBF QP} is that for a continuous estimation error set $\myset{B}$, the robust optimization problem consists of an infinite number of linear constraints on $u$, making it intractable to solve. In the following sections, we will answer the following question: How can we efficiently compute $k_{\rm ro}(\hat{x},\myset{B})$ with little to no conservativeness?

\section{A duality-based optimization reformulation}

We first re-write our safe controller \eqref{eq:robust CBF QP} as:
\begin{equation} \label{eq:robust optimization}
       \begin{aligned}
        k_{\rm ro} (\hat{x}, \myset{B}) & = \argmin_{u\in U} f_0(u,\hat{x})  \\
       \textup{ s.t. } &  \quad a(\xi)^\top u + b(\xi) \geq 0 \quad  \textup{  for all } \xi\in \{\hat{x}\} \oplus \myset{B},
    \end{aligned} 
\end{equation}
where $f_0(u,\hat{x}) =  \left\Vert u - k_d(\hat{x})  \right\Vert^{2}$, $a(\xi) = L_{g} h(\xi)^\top$ and \mbox{$b(\xi) = L_{f} h(\xi) +\alpha(h(\xi))$}. We define the robust feasible set:
\begin{equation}
    \myset{U}_{ro} = \{u\in U: a(\xi)^\top u + b(\xi) \geq 0, \,\forall \, \xi\in \{\hat{x}\} \oplus \myset{B} \},
\end{equation}
and note that this set, formed by the intersection of (possibly infinite) affine constraints, is convex.

Standard robust optimization treats constraint coefficients $a(\cdot)$ and $b(\cdot)$ as independent uncertain parameters within sets $\myset{P}_a \subset \mathbb{R}^m$ and $\myset{P}_b \subset \mathbb{R}$ \cite{bertsimas2011theory}.
Since $b(\cdot)$ is scalar valued, the worst-case scenario over the interval $\myset{P}_b$ is trivially its minimum. Thus, the structural complexity of robust reformulations arises primarily from the uncertainty set $\myset{P}_a$. 
While efficient reformulations are well-established for polytopic or ellipsoidal $\myset{P}_a$ \cite[Section 2.2]{bertsimas2011theory}, applying them directly to \eqref{eq:robust optimization} can be conservative. 
This is because these standard methods treat the uncertainty sets independently, neglecting the physical coupling of $a(\xi)$ and $b(\xi)$ through the shared uncertain state $\xi$. We demonstrate this issue in the following example.

\begin{examplecont}{1}
   For the scalar system in \eqref{eq:ex_scalar_system}, we compute that $a(x) = 2x(x^2 - 1),\:b(x) =-2x^4 + 1.205x^2 + 1$. As before, we consider the scenario when $x \in [0.95, 1.05]$.
   The image of the function $x\mapsto (a(x),b(x))$ is a curve in $\R^m\times \R$,
   where $-0.1825 \leq a(x) \leq 0.2152$ and \mbox{$-0.1025 \leq b(x) \leq 0.4585$}.
   If we consider the uncertainty in $a$ and $b$ separately, as in \cite{bertsimas2011theory}, no feasible $u$ exists, i.e., there is no $u  \in \mathbb R$ such that:
   $$\zeta_au+\zeta_b \geq 0, \, \forall \zeta_a \in[-0.183, 0.216], \, \zeta_b \in [-0.103, 0.459].$$
   However, it can be verified that the input $u=2$ satisfies the CBF inequality for all possible states $x \in [0.95, 1.05]$.
\end{examplecont}

In the following, we derive a systematic method to construct such an input by exploring reformulations of the optimization problem \eqref{eq:robust optimization} over the uncertainty set:
\begin{equation} \label{eq:uncertainty_set}
   \myset{P}(\hat x, \myset{B}) = \{(a(\xi), b(\xi))\in \mathbb{R}^{m+1}:  \xi\in \{\hat{x}\} \oplus \myset{B}\}. 
\end{equation}
For brevity, we omit explicitly denoting the dependence of $\myset{P}$ on $(\hat x, \myset{B})$, and write \eqref{eq:robust optimization} more succinctly as:
\begin{equation} \label{eq:reformed robust optimization}
    \begin{aligned}
        \argmin_{u\in U} f_0(u,\hat{x}) 
       \textup{ s.t. } \zeta_a^\top u + \zeta_b \geq 0 \quad  
       \forall \,
       (\zeta_a, \zeta_b) \in \myset{P}.
    \end{aligned} 
\end{equation}

\subsection{Exactness of the Convex Hull Relaxation}

\begin{thm}
    The robust optimization problem in \eqref{eq:reformed robust optimization} is equivalent to the following problem:
    \begin{equation} \label{eq:reformed robust optimization2}
    \begin{aligned}
        & \argmin_{u\in U} f_0(u,\hat{x})  \\
       \textup{ s.t. } &  \quad \zeta_a^\top u + \zeta_b \geq 0 \quad  
       \forall \,
       (\zeta_a, \zeta_b) \in \tilde{\myset{P}},
    \end{aligned} 
\end{equation}
where  $\tilde{\myset{P}}$ denotes the convex hull of the set $\myset{P}$.
\end{thm}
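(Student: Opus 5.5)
The plan is to show that the two problems share the same objective $f_0(u,\hat x)$ and the same domain $U$. It then suffices to show that their feasible sets coincide. Define
\[
\myset{U}_{\myset{P}} = \{u\in U: \zeta_a^\top u + \zeta_b \ge 0,\ \forall (\zeta_a,\zeta_b)\in\myset{P}\},
\]
and define $\myset{U}_{\tilde{\myset{P}}}$ analogously with $\tilde{\myset{P}}$ in place of $\myset{P}$. Equal feasible sets give identical argmin sets, which is the claimed equivalence. We prove the two inclusions separately.

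The inclusion $\myset{U}_{\tilde{\myset{P}}}\subseteq \myset{U}_{\myset{P}}$ is immediate. Since $\myset{P}\subseteq\tilde{\myset{P}}$, any $u$ satisfying the constraint for every element of $\tilde{\myset{P}}$ satisfies it in particular for every element of $\myset{P}$.

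For the reverse inclusion, fix $u\in\myset{U}_{\myset{P}}$ and consider the map $\phi_u(\zeta_a,\zeta_b)=\zeta_a^\top u+\zeta_b$. This map is affine (indeed linear) in $(\zeta_a,\zeta_b)$ for fixed $u$. Take any $(\zeta_a,\zeta_b)\in\tilde{\myset{P}}$. By the definition of the convex hull as the set of all finite convex combinations, we can write
\[
(\zeta_a,\zeta_b)=\sum_{i=1}^N\lambda_i(\zeta_a^i,\zeta_b^i),
\]
with $(\zeta_a^i,\zeta_b^i)\in\myset{P}$, $\lambda_i\ge 0$ and $\sum_i\lambda_i=1$. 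Caratheodory's theorem even allows $N\le m+2$, though this is not needed. Linearity of $\phi_u$ then gives
\[
\phi_u(\zeta_a,\zeta_b)=\sum_i\lambda_i\,\phi_u(\zeta_a^i,\zeta_b^i)\ge 0,
\]
because each term is nonnegative by feasibility of $u$ and each $\lambda_i$ is nonnegative. Hence $u\in\myset{U}_{\tilde{\myset{P}}}$.

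The main subtlety is whether $\tilde{\myset{P}}$ denotes the convex hull or its closure. If the closed convex hull is intended, we add a limiting argument. The set $\{(\zeta_a,\zeta_b):\zeta_a^\top u+\zeta_b\ge 0\}$ is a closed half-space, so the inequality passes to limits of points in the convex hull. Alternatively, compactness of $\myset{B}$ and continuity of $a,b$ make $\myset{P}$ compact, so its convex hull is already compact and no limiting argument is needed. Either way the feasible sets coincide, which completes the proof.
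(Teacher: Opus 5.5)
Your proof is correct and follows essentially the same route as the paper: equal feasible sets, the trivial inclusion from $\myset{P}\subseteq\tilde{\myset{P}}$, and the reverse inclusion via finite convex combinations and linearity of $\zeta_a^\top u+\zeta_b$ in $(\zeta_a,\zeta_b)$. Your extra remark on the closed convex hull is a valid refinement that the paper does not address. It works either because the constraint defines a closed half-space, or because $\myset{P}$ is compact (continuity of $a,b$ on the compact set $\{\hat x\}\oplus\myset{B}$), so its convex hull is already closed.
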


\begin{proof}
    The two optimization problems share the same cost function. To show equivalence, it suffices to prove that any feasible $\tilde{u}$ to \eqref{eq:reformed robust optimization} is also feasible to \eqref{eq:reformed robust optimization2} and vice versa.

    Now suppose that $\tilde{u}\in U$ exists such that  \mbox{$\zeta_a^\top \tilde{u} + \zeta_b \geq 0$} for all $(\zeta_a, \zeta_b) \in \myset{P}$. For any $(\zeta_a^\prime, \zeta_b^\prime) \in \tilde{\myset{P}}$, by definition, there exists a $\lambda\in\mathbb{R}_{\ge0}^N$ with $ \sum_j\lambda_j=1$ such that 
    \mbox{$(\zeta_a^\prime, \zeta_b^\prime) = \sum_j \lambda_j (\zeta_{a,j}, \zeta_{b,j})$} 
    with $(\zeta_{a,j}, \zeta_{b,j}) \in \myset{P}$. Thus:
    \begin{equation}
    \begin{aligned}
        \zeta_{a,j}^\top \tilde{u} + \zeta_{b,j} \geq 0, \forall j 
       \implies   & \sum_j \lambda_j (\zeta_{a,j}^\top \tilde{u} + \zeta_{b,j}) \geq 0  \\
        \implies & \zeta_{a}^{\prime,\top} \tilde{u} + \zeta_{b}^{\prime} \geq 0.
    \end{aligned}
    \end{equation}
    That is, any feasible $\tilde{u}$ to \eqref{eq:reformed robust optimization} is also feasible to \eqref{eq:reformed robust optimization2}. The reverse direction holds trivially since $\myset{P}\subseteq \tilde{\myset{P}}$.
\end{proof}

The above result indicates that over-approximating the uncertainty set $\myset{P}$ with its convex hull $\tilde{\mathcal{P}}$ preserves the optimal safe input without introducing conservatism.
This motivates us to pursue efficient online algorithms for convex-shaped uncertainty sets. 
In the following sections, we focus on cases where $\tilde{\mathcal{P}}$ takes an exact polytopic or ellipsoidal form.
As demonstrated later in Section \ref{sct:simulations}, if the true convex hull is not of this form, it can still be closely over-approximated by a set of this form to yield a minimally conservative controller.

\begin{lemma}
    When $U = \R^m$, the robust optimization problem \eqref{eq:reformed robust optimization2} is feasible if $\tilde {\mathcal P} \cap \mathcal N = \emptyset$ where 
    $\mathcal N = \{ 0\}^m \times (-\infty,0]$ .
    \label{lem:feasible}
\end{lemma}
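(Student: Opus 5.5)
The plan is a strict separating hyperplane argument between the compact convex set $\tilde{\mathcal P}$ and the closed convex cone $\mathcal N$. A separating functional $(w,s)\in\R^m\times\R$ then gives an explicit robustly feasible input: either $u=w/s$, or a large multiple of $w$.

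\textbf{Step 1 (compactness).} Since $h$ is continuously differentiable and $f,g$ are locally Lipschitz, the maps $a(\xi)=L_gh(\xi)^\top$ and $b(\xi)=L_fh(\xi)+\alpha(h(\xi))$ are continuous. The set $\{\hat x\}\oplus\mathcal B$ is compact, so $\mathcal P$ is compact. By Carath\'eodory's theorem, the convex hull of a compact set in $\R^{m+1}$ is compact, so $\tilde{\mathcal P}$ is compact and convex. The set $\mathcal N$ is closed and convex.

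\textbf{Step 2 (strict separation).} A compact convex set and a disjoint closed convex set can be strictly separated. So there exist $(w,s)\in\R^m\times\R$ and constants $c'<c$ such that
\begin{equation*}
w^\top\zeta_a+s\zeta_b\ge c \ \ \forall(\zeta_a,\zeta_b)\in\tilde{\mathcal P}, \qquad s\,t\le c' \ \ \forall t\le 0 .
\end{equation*}
The second inequality must hold for arbitrarily negative $t$, which forces $s\ge 0$. Taking $t=0$ gives $0\le c'<c$, so $c>0$.

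\textbf{Step 3 (construct $u$).} There are two cases.
\begin{itemize}
\item If $s>0$, set $u=w/s$. Then $\zeta_a^\top u+\zeta_b\ge c/s>0$ on $\tilde{\mathcal P}$.
\item If $s=0$, then $w^\top\zeta_a\ge c>0$ on $\tilde{\mathcal P}$. Let $M=\max_{\tilde{\mathcal P}}|\zeta_b|$, which is finite by compactness, and set $u=\lambda w$ with $\lambda\ge M/c$. Then $\zeta_a^\top u+\zeta_b\ge \lambda c-M\ge 0$.
\end{itemize}
In either case $u$ is feasible for \eqref{eq:reformed robust optimization2} with $U=\R^m$. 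Because the objective $f_0(\cdot,\hat x)$ is a strongly convex quadratic over a nonempty closed convex feasible set, the minimizer exists. By the preceding theorem, the same $u$ is also feasible for \eqref{eq:reformed robust optimization}.

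\textbf{Main obstacle.} The key point is obtaining \emph{strict} separation with $c>0$; mere separation would allow $c=0$ and the argument would fail. Strict separation needs compactness of $\tilde{\mathcal P}$, which Step 1 supplies. The degenerate case $s=0$ has to be handled separately, which Step 3 does by scaling $w$.
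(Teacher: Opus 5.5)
Your proof is correct and follows essentially the same route as the paper. Both arguments strictly separate the compact convex $\tilde{\mathcal P}$ from the closed convex $\mathcal N$, use the ray $\{0\}^m\times(-\infty,0]$ to force a nonnegative last component and a positive threshold, and then build $u$ from the normal vector. The differences are cosmetic: the paper avoids your case split by always taking $u=\mu/(t+\varepsilon)$ with small $\varepsilon>0$ (for $t=0$ this is exactly your large scaling of $w$), and you additionally justify the compactness of $\tilde{\mathcal P}$, which the paper simply asserts.
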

\begin{proof}
    Recall that $\mathcal N$ is convex and closed, and $\mathcal {\tilde P}$ is convex and compact. If they are disjoint, then the strict hyperplane separation theorem \cite{boyd2004convex} states that there exists a nonzero vector $(\mu,t) \in \mathbb R^{m+1}$ and $\gamma \in \mathbb R$ such that:
    \begin{equation*}
        \zeta_a^\top \mu + \zeta_b t > \gamma > \alpha t, \quad \forall (\zeta_a,\zeta_b) \in \mathcal{\tilde P}, \, \alpha \in (-\infty, 0].
    \end{equation*}
    For this to hold for all $\alpha \in (-\infty, 0]$, we must have $t \geq 0$ and $\gamma > 0$.
    Let $\varepsilon \in \mathbb R _{>0}$ be any constant such that:
    \begin{equation*}
        \varepsilon < \frac{\gamma}{| \inf_{(\zeta_a, \zeta_b) \in \mathcal{\tilde P}} \zeta_b|}.
    \end{equation*}
    We note that such an $\varepsilon$ will always exist since the infimum in the denominator is over a compact set. Now:
    $$\zeta_a ^\top\mu + \zeta_b(t+\varepsilon) > 0, \quad \forall (\zeta_a,\zeta_b) \in \mathcal{\tilde P},$$
    and hence the $u = \frac{1}{t+\varepsilon}\mu$ will be feasible for \eqref{eq:reformed robust optimization2}.
\end{proof}

Intuitively, Lemma \ref{lem:feasible} ensures that if the control gain $\zeta_a$ is sign-indefinite, the system's natural dynamics must inherently satisfy the CBF constraint ($\zeta_b \geq 0$ for all $(\zeta_a,\zeta_b) \in \mathcal{\tilde P}$).

For the sake of brevity, in the remainder of the discussion, we refer to $\tilde{\myset{P}}$ simply as the uncertainty set. We also define:
\begin{equation} \label{eq:inner_opt}
    m (u) := \min_{(\zeta_a, \zeta_b) \in \tilde{\myset{P}}} \zeta_a^\top u + \zeta_b.
\end{equation}
The safety controller in \eqref{eq:robust optimization} thus becomes:
\begin{equation} \label{eq:robust_optization_twolayers}
    k_{\rm ro}(\hat{x}, \myset{B}) = \argmin_{u\in U} f_0(u,\hat{x})  \ 
   \textup{ s.t. }  m(u) \geq 0.
\end{equation}
Thus, this robust optimization problem consists of the  inner optimization problem \eqref{eq:inner_opt} and the outer one \eqref{eq:robust_optization_twolayers}.
Furthermore, since $\tilde{\mathcal{P}}$ in \eqref{eq:inner_opt} depends on the current state estimate, this set $\tilde{\mathcal{P}}$ (or an over-approximation thereof) must be computed prior to every evaluation of \eqref{eq:robust_optization_twolayers}.

\subsection{Polytopic Uncertainty}

\begin{assumption}
    The set $\tilde{\myset{P}}$ forms a compact polytope, i.e:
    \begin{equation} \label{eq:polytopic_uncertainty}
        \tilde{\myset{P}} = \left\{ (\zeta_a,\zeta_b)\in \mathbb{R}^{m+1}: C\begin{bmatrix}
            \zeta_a \\
            \zeta_b
        \end{bmatrix} \leq d\right\},
    \end{equation}
    where $C\in \mathbb{R}^{p\times (m+1)}$ and $d \in \mathbb{R}^p$ are known.
    \label{ass:polytopic}
\end{assumption}

 Under Assumption \ref{ass:polytopic}, \eqref{eq:inner_opt} becomes:
\begin{equation} \label{eq:second_layer_opt}
\begin{aligned}
m(u) 
           & =  \min_{\eta \in \mathbb{R}^{m+1} } \begin{bmatrix}
         u^\top &   1
     \end{bmatrix} \eta \ \textup{ s.t. } \ C\eta \leq d,
\end{aligned}
\end{equation}
which is a linear program. 
Following the standard Lagrangian duality scheme \cite{boyd2004convex}, its dual problem is: 
\begin{equation} \label{eq:LP_dual}
\begin{aligned}
   m^{\star}(u) = \max_{\lambda\in \mathbb{R}^p} & \quad  - d^\top \lambda \\
    \textup{ s.t. } & C^\top \lambda + \begin{bmatrix}
        u \\
        1
    \end{bmatrix} = 0,  \lambda \geq 0.
\end{aligned}
\end{equation}
Since strong duality holds for linear programs \cite{boyd2004convex}, we know $m(u) =  m^{\star}(u)$. Thus, the original constraint $m(u)\geq 0$ in \eqref{eq:robust_optization_twolayers} is feasible if and only if there exists a $\lambda\in \mathbb{R}^p$ that satisfies the above constraints and  $- d^\top \lambda \geq 0$. We thus derive a safe controller as:
\begin{equation} \label{eq:equivalent QP}
    \begin{aligned}
            (k_{re}(\hat{x}, \myset{B}), \lambda(\hat{x}, \myset{B})) & =  \quad    \argmin_{u\in U, \lambda\in \mathbb{R}^p} \quad   \left\Vert u - k_d(\hat{x})  \right\Vert^{2}  \\
           \textup{ s.t. } &   - d^\top \lambda \geq 0,  C^\top \lambda +\begin{bmatrix}
        u \\
        1
    \end{bmatrix} = 0,  \lambda \geq 0.
    \end{aligned}
\end{equation}
This is again a quadratic program that can be efficiently solved online. Moreover, it is equivalent to \eqref{eq:robust_optization_twolayers} as summarized below.

\begin{proposition}
    Under Assumption \ref{ass:polytopic}, if $ \myset{U}_{ro} $ is not empty, then  $ k_{re}(\hat{x}, \myset{B}) $ in \eqref{eq:equivalent QP} exists and $k_{re}(\hat{x}, \myset{B})  = k_{\rm ro}(\hat{x}, \myset{B})$.
\end{proposition}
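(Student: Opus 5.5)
The plan is to show that the projection onto the $u$-coordinate of the feasible set of \eqref{eq:equivalent QP} coincides exactly with $\myset{U}_{ro}$. Because the cost $\|u-k_d(\hat x)\|^2$ depends only on $u$, the two problems then minimize the same strictly convex function over the same set, and so have the same minimizer. Concretely, set
\[
\Lambda := \{(u,\lambda)\in U\times\R^p : -d^\top\lambda\ge 0,\ C^\top\lambda+[u^\top\ 1]^\top=0,\ \lambda\ge 0\}.
\]
I will prove that $\pi_u(\Lambda)=\myset{U}_{ro}$, where $\pi_u$ denotes projection onto the $u$-component.

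First, by the convex-hull theorem above, $\myset{U}_{ro}=\{u\in U: m(u)\ge 0\}$, with $m$ defined in \eqref{eq:inner_opt} over $\tilde{\myset P}$. Under Assumption \ref{ass:polytopic}, $\tilde{\myset P}$ is a nonempty compact polytope, since it contains $\myset P$. Hence for every $u$ the LP \eqref{eq:second_layer_opt} is feasible and bounded, and $m(u)$ is finite and attained.

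Next I show both inclusions.

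For $\pi_u(\Lambda)\subseteq\myset{U}_{ro}$, weak duality suffices. Take $(u,\lambda)\in\Lambda$ and any $\eta$ with $C\eta\le d$. Then
\[
[u^\top\ 1]\eta=-\lambda^\top C\eta\ge -\lambda^\top d\ge 0,
\]
where the first inequality uses $\lambda\ge 0$. Hence $m(u)\ge 0$.

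For $\myset{U}_{ro}\subseteq\pi_u(\Lambda)$, I use LP strong duality. Since the primal is feasible with a finite optimum, the dual \eqref{eq:LP_dual} attains its optimum at some $\lambda^\star$ with $-d^\top\lambda^\star=m(u)\ge 0$, so $(u,\lambda^\star)\in\Lambda$. The main care point is exactly this attainment of the dual maximum, not just equality of the values. It holds because a primal LP with a finite optimum has a dual with an attained optimum; this is where compactness and nonemptiness of $\tilde{\myset P}$ are used.

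Finally, existence and equality follow. $\myset{U}_{ro}$ is nonempty by hypothesis. It is closed and convex, being the intersection of the polytope $U$ with half-spaces. The objective is strictly convex and coercive in $u$. So \eqref{eq:robust_optization_twolayers} has a unique minimizer $k_{\rm ro}$. Since $\pi_u(\Lambda)=\myset{U}_{ro}$, any minimizer $(u,\lambda)$ of \eqref{eq:equivalent QP} must have $u=k_{\rm ro}$. Conversely, pairing $k_{\rm ro}$ with its dual certificate from the previous step gives a point of $\Lambda$ attaining the optimal value. Thus \eqref{eq:equivalent QP} attains its minimum, and its $u$-component is uniquely $k_{\rm ro}$, although $\lambda$ may be non-unique. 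This gives $k_{re}(\hat x,\myset B)=k_{\rm ro}(\hat x,\myset B)$.
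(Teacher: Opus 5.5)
Your proof is correct and follows the same route as the paper: LP strong duality for the inner problem \eqref{eq:second_layer_opt} gives $m(u)\ge 0$ if and only if some $\lambda\ge 0$ with $C^\top\lambda+[u^\top\ 1]^\top=0$ satisfies $-d^\top\lambda\ge 0$. This makes the $u$-projection of the feasible set of \eqref{eq:equivalent QP} equal to $\myset{U}_{ro}$, and you are more careful than the paper's brief sketch in three places: you separate the weak-duality inclusion from the one that needs dual attainment, you make that attainment explicit, and you spell out existence and uniqueness of the $u$-component of the minimizer.
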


\subsection{Ellipsoidal Uncertainty}

\begin{assumption}
    The set $\tilde{\myset{P}}$ forms an ellipsoid, i.e.:
\begin{equation} \label{eq:ellipsoid_uncertainty}
    \tilde{\myset{P}} = \left\{ \eta = (\zeta_a,\zeta_b)\in \mathbb{R}^{m+1}: \eta^\top P \eta + q^\top \eta + r \leq 0 \right\},
\end{equation}
where $P\in \mathbb{R}^{(m+1)\times (m+1)}, q \in \mathbb{R}^{m+1}, r\in \mathbb{R}$ are known, and $ P = P^\top$ is positive definite.
\label{ass:ellipsoid}
\end{assumption}

Under Assumption \ref{ass:ellipsoid}, the inner layer optimization is:
\begin{equation} \label{eq:second_layer_opt_ellipsoid}
\begin{aligned}
m(u) & = 
\min_{\eta \in \mathbb{R}^{m+1} } \begin{bmatrix}
         u^\top &   1
     \end{bmatrix} \eta \\
     & \quad \textup{ s.t. } \  \eta^\top P \eta + q^\top \eta + r \leq 0.
\end{aligned}
\end{equation}

Following a similar procedure, the Lagrangian \cite{boyd2004convex} is:
\begin{equation}
    L(\eta,\lambda) = \begin{bmatrix}
         u^\top &   1
     \end{bmatrix} \eta + \lambda(\eta^\top P \eta + q^\top \eta + r),
\end{equation}
and the dual function is $l(\lambda) = \inf_{\eta\in \mathbb{R}^{m+1}} L(\eta,\lambda).$
If $\lambda=0$, the $L(\eta,\lambda)$ reduces to a non-zero linear function of $\eta$, yielding $l(0) = -\infty$. 
Since the dual problem attempts to maximize $l(\lambda)$, we can ignore this case and assume $\lambda > 0$.
Now, $L(\eta,\lambda)$ is quadratic in $\eta$ where $P$ is positive definite, hence the infimum in $l(\lambda)$ is achieved at $\nabla_{\eta} L(\eta,\lambda) = 0$.
This gives: 
\begin{equation}
    l(\lambda) = \lambda r - \frac{1}{4} ( \begin{bmatrix}
         u \\
         1
     \end{bmatrix}  + \lambda q )^\top (\lambda P)^{-1}( \begin{bmatrix}
         u \\
         1
     \end{bmatrix}  + \lambda q ), 
\end{equation}
and the dual problem:
\begin{equation*}
    \begin{aligned}
        m^{\star}(u) 
        = \max_{\lambda\in \mathbb{R}} & \   r \lambda - \frac{1}{4} ( \begin{bmatrix}
         u \\
         1
     \end{bmatrix}  + \lambda q )^\top (\lambda P)^{-1}( \begin{bmatrix}
         u \\
         1
     \end{bmatrix}  + \lambda q ) \\
    \textup{ s.t. } &  \lambda \geq 0 \\
     = \max_{\lambda, t \in \mathbb{R}} & \   r \lambda  - t \\
    \textup{ s.t. } &  \lambda \geq 0 \\
     & t - \frac{1}{4} ( \begin{bmatrix}
         u \\
         1
     \end{bmatrix}  + \lambda q )^\top (\lambda P)^{-1}( \begin{bmatrix}
         u \\
         1
     \end{bmatrix}  + \lambda q ) \geq 0.
    \end{aligned}
\end{equation*}

The last constraint can be equivalently expressed as a Linear Matrix Inequality (LMI) using the Schur complement:
\begin{equation} \label{eq:LMI condition}
  \begin{bmatrix}
         4t & [u^\top \ 1] + \lambda q^\top \\
         \begin{bmatrix}
             u \\
             1
         \end{bmatrix} + \lambda q & \lambda P
     \end{bmatrix} \succeq 0.
\end{equation}

Since the primal problem \eqref{eq:second_layer_opt_ellipsoid} has linear cost with a non-empty ellipsoidal feasible set, Slater's condition holds \cite{boyd2004convex}. Thus, strong duality holds, i.e., for any  $u$, $m^\star(u) =m(u)$. 

Now consider the outer layer of the optimization problem. Recall that the cost function is $f_0(u,\hat{x})  = \left\Vert u - k_d(\hat{x})  \right\Vert^{2}$. 
By introducing an auxiliary variable $s$, 
minimizing $f_0(u,\hat{x}) $ is equivalent to minimizing $s\in \mathbb{R}$ such that  $\left\Vert u - k_d(\hat{x})  \right\Vert^{2} \leq s$. Using the Schur Complement again, we  obtain an equivalent LMI constraint:
\begin{equation} \label{eq:LMI cost}
    \begin{bmatrix} 
         s & (u - k_d(\hat{x}))^\top  \\
        u - k_d(\hat{x}) & I
     \end{bmatrix} \succeq 0.
\end{equation}

 The reformulated controller is thus given by the following semi-definite program (SDP) in its standard form:
 \begin{equation} \label{eq:equivalent SDP}
    \begin{aligned}
            k_{re}(\hat{x}, \myset{B}), \lambda^\star, s^\star, t^\star & =  \quad    \argmin_{u\in U, \lambda,  s, t \in \mathbb{R}} \quad   s\\
           \textup{ s.t. } & 
           \eqref{eq:LMI condition}, \eqref{eq:LMI cost}, r\lambda - t \geq 0, \lambda \geq 0.
    \end{aligned}
\end{equation}

This equivalence between the above SDP and the robust optimization problem in \eqref{eq:robust_optization_twolayers} is summarized below.
\begin{proposition}
    Under Assumption \ref{ass:ellipsoid}, if $ \myset{U}_{ro} $ is not empty, then  $ k_{re}(\hat{x}, \myset{B}) $ in \eqref{eq:equivalent SDP} exists and $k_{re}(\hat{x}, \myset{B})  = k_{\rm ro}(\hat{x}, \myset{B})$.
\end{proposition}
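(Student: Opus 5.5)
The plan is to show that the feasible set of \eqref{eq:equivalent SDP}, projected onto the $u$ coordinate, equals the feasible set of \eqref{eq:robust_optization_twolayers}. By the convex hull theorem above, the latter feasible set is exactly $\myset{U}_{ro}$. Since \eqref{eq:LMI cost} together with minimization over $s$ reproduces the cost $\|u-k_d(\hat x)\|^2$, the two problems then have the same minimizers in $u$. Existence and uniqueness come from the outer problem: $\myset{U}_{ro}$ is a nonempty closed convex set (an intersection of closed half-spaces with the polytope $U$), and $f_0$ is strongly convex. Hence $k_{\rm ro}$ exists and is unique, and so $k_{re}$ exists and coincides with it.

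The core step is the equivalence
\[
m(u)\ge 0 \iff \exists\, \lambda\ge 0,\ t \ \text{such that \eqref{eq:LMI condition} holds and } r\lambda - t\ge 0 .
\]
I will prove the two directions separately.

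\emph{Backward direction.} Suppose $(\lambda,t)$ is feasible. If $\lambda>0$, the Schur complement (using $\lambda P\succ 0$) turns \eqref{eq:LMI condition} into $t\ge \tfrac14(\cdot)^\top(\lambda P)^{-1}(\cdot)$. This gives $l(\lambda)\ge r\lambda - t\ge 0$, and weak duality then yields $m(u)\ge l(\lambda)\ge 0$. If $\lambda=0$, the LMI forces $[u^\top\ 1]^\top=0$ through the zero lower-right block, which is impossible since the last entry is $1$. So $\lambda=0$ is never feasible, and this corner case is excluded.

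\emph{Forward direction.} Suppose $m(u)\ge 0$. I invoke the strong duality already established (Slater's condition holds because the ellipsoid has nonempty interior, which needs $r<\tfrac14 q^\top P^{-1}q$). I also need the dual supremum to be \emph{attained} at some $\lambda^\star>0$. For this I would argue that $l(\lambda)\to-\infty$ both as $\lambda\to 0^+$ (the term $-\tfrac{1}{4\lambda}\|\cdot\|^2_{P^{-1}}$ dominates, since the vector is nonzero) and as $\lambda\to\infty$ (there $l(\lambda)\approx \lambda(r-\tfrac14 q^\top P^{-1}q)$ with a negative coefficient under nondegeneracy). A continuous concave function on $(0,\infty)$ with these limits attains its maximum. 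Setting $t=\tfrac14(\cdot)^\top(\lambda^\star P)^{-1}(\cdot)$ then satisfies the LMI with $r\lambda^\star - t = m(u)\ge 0$.

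The main obstacle is this attainment and the degenerate cases: $\lambda=0$, and an ellipsoid with empty interior. Without attainment, the equivalence of $m(u)\ge0$ with the existence of a certificate could fail exactly at the boundary $m(u)=0$. I would handle it through the coercivity argument above. If the ellipsoid may be a single point, I would instead argue directly that the problem reduces to a linear constraint, or assume nondegeneracy as implicit in Assumption~\ref{ass:ellipsoid}. The remaining step, that \eqref{eq:LMI cost} is equivalent to $s\ge\|u-k_d\|^2$, is a routine Schur complement with the identity block.
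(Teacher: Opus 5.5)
Your proof is correct and follows the same route as the paper: the Lagrange dual of the inner problem, a Schur complement to obtain \eqref{eq:LMI condition}, strong duality via Slater, and an epigraph/Schur step for the cost. Your added checks (that $\lambda=0$ is infeasible in the LMI, that the dual optimum is attained, which Slater's theorem already gives once $m(u)$ is finite, and that Slater needs $r<\tfrac14 q^\top P^{-1}q$) fill in details the paper leaves implicit. One caution: in the single-point case your fallback of ``reducing to a linear constraint'' cannot rescue the statement, because there the SDP only certifies the strict inequality $[u^\top\ 1]\,c>0$ with $c=-\tfrac12 P^{-1}q$, so its minimum may not be attained and nondegeneracy genuinely has to be assumed, as the paper tacitly does.
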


\begin{remark}
    In general, the uncertainty set $\tilde{P}$, though convex, is hardly polytopic or ellipsoidal due to its nonlinear dependence on $\xi$. 
    In such scenarios, one needs to construct a polytopic or ellipsoidal over-approximation set $\hat{\myset{P}} \supseteq \tilde{\myset{P}}$ and then apply the aforementioned reformulations.
    Since $\hat{\myset{P}}$ contains all possible realizations of the CBF parameters, solving \eqref{eq:equivalent QP} and \eqref{eq:equivalent SDP} for $\hat{\myset{P}}$ provides a sufficient condition for robust safety. The conservativeness of this approach thus depends on how tight this over-approximation is. 
    The two examples shown in Section \ref{sct:simulations} describe how this can be achieved in practice.
\end{remark}

\section{Simulations}

\label{sct:simulations}

In this section, we provide two illustrative examples\footnote{https://github.com/rahalnanayakkara/duality-based-safety-filters} where we apply different methods for over-approximating the uncertainty set using polytopes and ellipsoids, respectively.

\begin{example}
\label{ex:double int}
    Consider a double integrator system with state $x = (x_1, x_2)\in \mathbb{R}^2$ and input $u\in \mathbb{R}$:
    \begin{equation*}
        \begin{bmatrix}
            \dot{x}_1 \\
            \dot{x}_2
        \end{bmatrix} = \begin{bmatrix}
            x_2 \\
            0
        \end{bmatrix} + \begin{bmatrix}
            0 \\
            1
        \end{bmatrix} u,
    \end{equation*}
    and the safety constraint
    $h(x) = 1 - x_1^2 - x_2^2-x_1x_2 \geq 0$. It can be verified that $h$ is a CBF for the linear class $\mathcal{K}$ function $\alpha(s) = s$. Following the convention in \eqref{eq:robust optimization}, we compute $a(x) = -x_1 -2x_2$ and $b(x) = 1 - x_1^2 - 2x_2^2 - 3x_1 x_2 $. 

    Suppose that an online state estimate $\hat{x}$ and an error bound set $\myset{B} = \{(e_1, e_2) \in \R^2: |e_1|\leq \delta, |e_2| \leq \delta \}$, where \mbox{$\delta \in \mathbb R_{>0}$}, are available for online feedback. 
    Denote the rectangular state uncertainty set $\{ \hat x\} \oplus \mathcal B$ as $\myset{I}_1 \times \myset{I}_2$.
    In the following, we give details about the uncertainty set $\myset{P}$, its convex over-approximation, as well as the safety filter design.
    
\begin{figure}[t]
        \centering
        \subfloat[Uncertainty Set]{
            \includegraphics[width=0.47\linewidth]{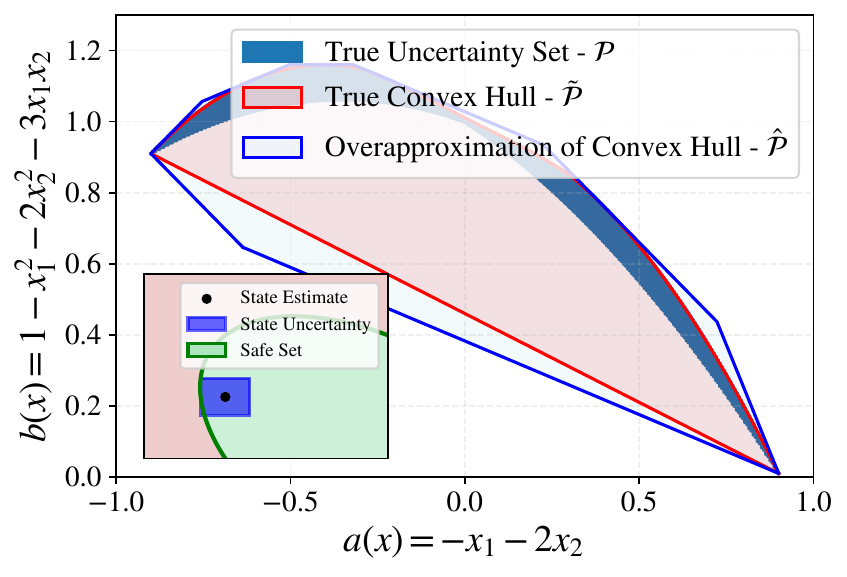}
            \label{fig:ex2_uncertain_set_a}
        }\hfill
        \subfloat[Degradation of safety with $\delta$]{
            \includegraphics[width=0.47\linewidth]{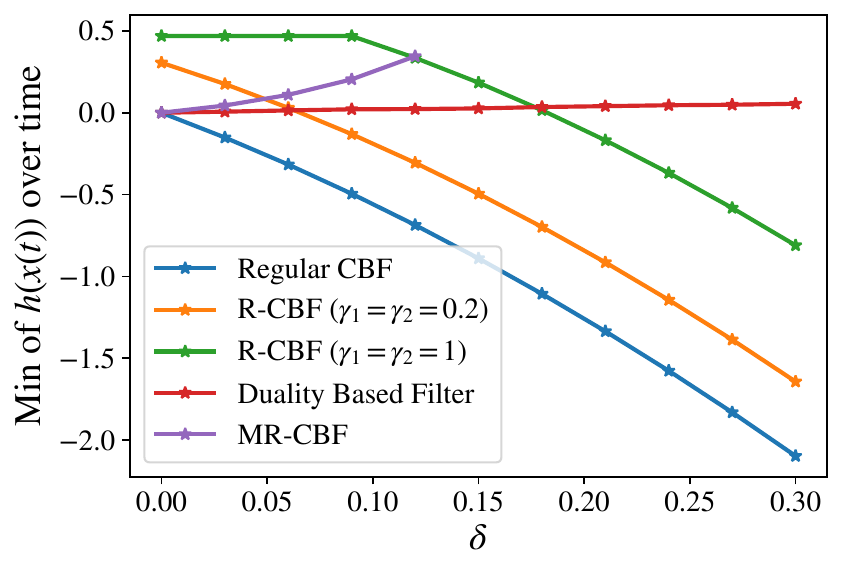}
            \label{fig:ex2_uncertain_set_b}
        }
        \caption{Visualization of the uncertainty set and the degradation of safety with $\delta$ for Example \ref{ex:double int}.}
        \label{fig:ex2_uncertain_set}
        \vspace{-10pt}
    \end{figure}

    \begin{figure}[t]
        \centering
        \subfloat[Comparison of trajectories]{
            \includegraphics[width=0.47\linewidth]{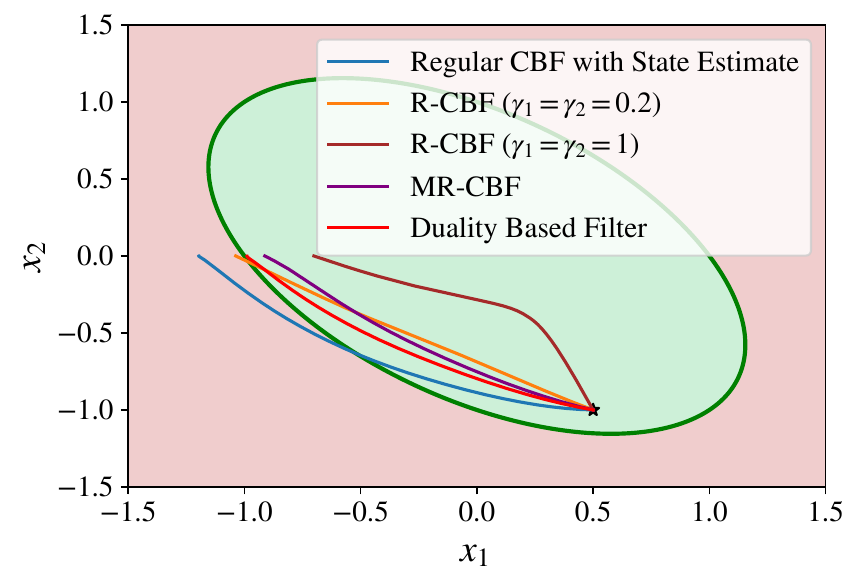}
            \label{fig:ex2_single_init_a}
        }\hfill
        \subfloat[Comparison of control input]{
            \includegraphics[width=0.47\linewidth]{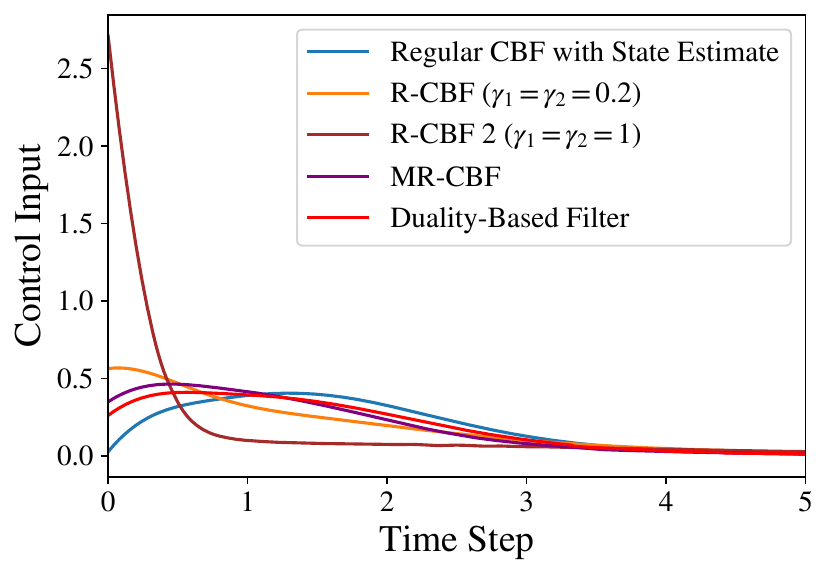}
            \label{fig:ex2_single_init_b}
        }
        \caption{Comparison of standard CBFs, R-CBFs, MR-CBFs and the proposed duality based method for Example \ref{ex:double int}.}
        \label{fig:ex2_single_init}
        \vspace{-10pt}
    \end{figure}
    
    \textit{Over-approximation of the uncertainty set}: We use the supporting hyperplane technique \cite{frehse2013flowpipe} to compute an over-approximation of the convex hull of $\myset{P}$. For any unit direction vector \mbox{$v=(v_1, v_2) \in \mathbb{S}^1$}, we have:
    \begin{equation*}
        v^T \zeta \leq \sup_{\tilde{\zeta} \in \tilde{(\myset{P})}} v^T \tilde{\zeta}, \quad \forall \zeta \in \tilde{(\myset{P})}.
    \end{equation*}
    The supremum on the right hand side can be re-written as \mbox{$\sup_{(x_1, x_2) \in \myset{I}_1 \times \myset{I}_2} \phi_v(x)$}, where $\phi_v(x) = v_1 a(x) + v_2 b(x)$.
    Since $\phi_v(x)$ is an indefinite quadratic function of $(x_1, x_2)$ for all $v \in \mathbb{R}^2$, and $\myset{I}_1 \times \myset{I}_2$ is rectangular, this supremum will be attained on the boundary, and will have the value:
    \begin{equation*}
        \max \{\gamma_1(\hat x_1-\delta), \gamma_1(\hat x_1+\delta), \gamma_2(\hat x_2-\delta), \gamma_2(\hat x_2+\delta)\},
    \end{equation*}
    where $\gamma_1(z) = \sup_{x_2 \in \myset{I}_2} \phi((z, x_2))$, and $\gamma_2(z) = \sup_{x_1 \in \myset{I}_1} \phi((x_1, z)).$
    Since both $\gamma_1, \gamma_2$ are the supremum of a quadratic function of a single variable over a closed interval, they can be computed analytically.
    Thus, for any given vector $v \in \mathbb{S}^1$, a supporting hyperplane of the form $v^T \zeta \leq c$, where $c \in \mathbb{R}$, can be obtained. By repeating this procedure for multiple such vectors $\{v^{(1)}, \dots,v^{(p)}\}$, where $p \in \mathbb{Z}_{>0}$, we can construct a polytopic over-approximation of the convex hull of $\myset{P}$. Theoretically, as $p \to \infty$ we recover the exact convex hull $\tilde{\myset{P}}$, and thus in practice we may choose a sufficiently large $p$ such that a desired accuracy is achieved.
    
    We chose $p=16$ directions $v^{(i)}$ uniformly sampled from the unit circle. 
    Fig. \ref{fig:ex2_uncertain_set}(a) shows state-space uncertainty at a specific instant (inset) alongside the corresponding set $\myset{P}$, its convex hull $\tilde{\myset{P}}$, and the hyperplane-based over-approximation $\hat{\myset{P}}$.
    We again note that any additional conservatism that may arise comes from the over-approximation of $\tilde{\myset{P}}$ by $\hat{\myset{P}}$ and not when going from $\myset{P}$ to $\tilde {\myset{P}}$.

    Simulations use OSQP over a $20$-second horizon with $0.02\,\text{s}$ steps. Fig. \ref{fig:ex2_single_init}(a) shows closed-loop trajectories from an initial state (star), where the reference $k_d = 0$ drives the system toward the unsafe region. Fig. \ref{fig:ex2_single_init}(b) shows the required inputs: R-CBFs with large $\gamma_1, \gamma_2$ demand large inputs and are overly conservative, unnecessarily restricting the system to the interior. Conversely, regular CBFs or R-CBFs with small $\gamma_1, \gamma_2$ violate safety. In contrast, our proposed method guarantees safety while minimizing conservatism.
    
    Fig. \ref{fig:ex2_uncertain_set}(b) depicts the minimum $h(x(t))$ across 10 initial conditions around the origin under varying uncertainty magnitudes $\delta$. Regular CBFs fail ($h(x)<0$) for any $\delta>0$. R-CBFs ensure safety for small $\delta$ but remain highly conservative. MR-CBFs are similarly conservative and become infeasible (i.e., \eqref{eq:MR-CBF} lacks a solution) for $\delta > 0.1$. In contrast, our method consistently maintains $h(x)\geq 0$ for all tested $\delta$ with minimal conservatism.
\end{example}

\begin{example}
\label{ex:segway}
We consider a Segway model: 
\begin{equation}
    \begin{aligned}
        \begin{bmatrix}
            \dot{p}(t) \\
            \dot{\phi}(t) \\
            \dot{v}(t) \\
            \dot{\omega}(t)
        \end{bmatrix} = 
        \begin{bmatrix}
            v \\
            \omega \\
            f_{v}(\phi,v,\omega) \\
            f_{w}(\phi,v,\omega)
        \end{bmatrix} +  
        \begin{bmatrix}
            0 \\
            0 \\
            g_{v}(\phi) \\
            g_{w}(\phi)
        \end{bmatrix} u ,
    \end{aligned}
\end{equation}
with its state $x =(p, \phi,v, \omega)$, where $p, v, \phi, \omega$ denote the position and velocity of the wheel, the pitch angle and angular velocity of its upper body, respectively. The control input $u\in [-8, 8]$ is the voltage applied on the motors at the wheels. Due to space limitations, we refer the interested reader to \cite{molnar2022safety} for more details about this model.

For simplicity of illustration, we consider a safety constraint on rotational movement: $h(x) = 1 - (3\phi^2 + 2\phi
\omega + \omega^2)$ similar to \cite{nanayakkara2025safety}, and simulate the model starting from an initial condition $x_0 = (-4, -0.5, 0, 1)$. The online state estimate $\hat{x}$ and state uncertainty $\myset{B} = \{z: \| z\|\leq \varepsilon \}$ are assumed to be known for feedback design. 
The nominal input is an unsafe LQR controller $u = K^\top \hat{x}$. We inject an adversarial error 
$\hat{x} = x - \frac{\epsilon}{\sqrt{x_2^2+x_4^2}} [0, x_2, 0, x_4]^\top$,
to deliberately mislead the controller by making $h(\hat{x}) \geq h(x)$ (cf. Fig \ref{fig:segway example}(b)).

In this example, we numerically over-approximate the uncertainty set in the safety filter using an ellipsoid.
We first take $N=1000$ state samples $\{\tilde{x}^i\} $ from the state uncertainty set, compute their respective CBF parameters $(a(\tilde{x}^i), b(\tilde{x}^i))$,  and then numerically find an enclosing ellipsoid using the mean and covariance matrix of the samples. 
The reformulated controller \eqref{eq:equivalent SDP} is solved via CVXPY's CLARABEL over a $10$-second horizon with $0.02\,\text{s}$ steps.

In Fig \ref{fig:segway example} (a) we compare our approach against naively using a regular CBF with the state estimate and the R-CBF from \cite{nanayakkara2025safety}. 
Since the R-CBF values for the constants $\gamma_1$ and $\gamma_2$, taken from \cite{nanayakkara2025safety} were tuned for the specific level of uncertainty present, both approaches guarantee invariance of the safe set.
We also compare the performance of the proposed approach for different levels of state uncertainty in Fig \ref{fig:segway example} (c), and observe that the trajectories always stay within the safe region. 

\begin{figure}[t] 
    \centering
    \subfloat[Comparison of safety filters]{
        \includegraphics[width=0.305\linewidth]{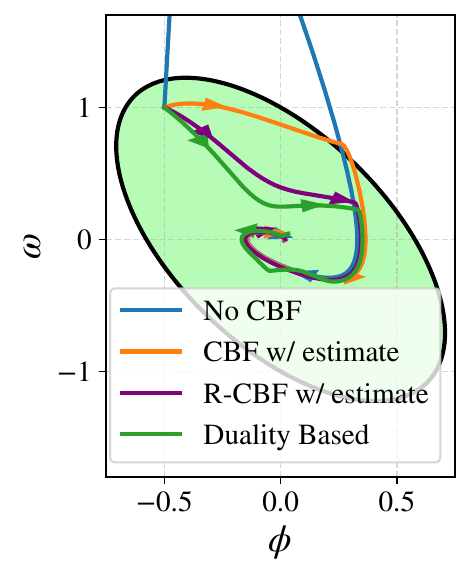}
    }\hfill
    \subfloat[True state vs.\ state estimate]{
        \includegraphics[width=0.305\linewidth]{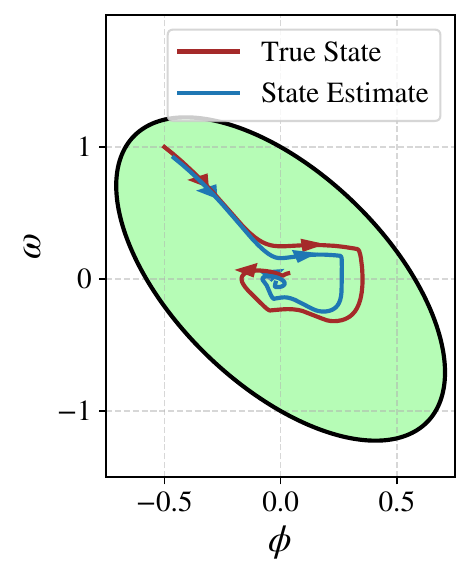}
    }\hfill
    \subfloat[Varying levels of state uncertainty]{
        \includegraphics[width=0.305\linewidth]{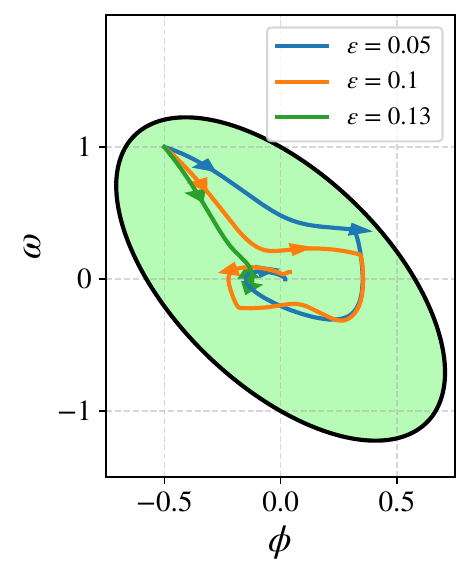}
    }
    \caption{State trajectories for the segway in Example \ref{ex:segway}.}
    \label{fig:segway example}
\vspace{-10pt}
\end{figure}

\end{example}

\section{Conclusion} 
\label{sec:conclusion}

In this work, we proposed a duality-based reformulation for robust optimization problems arising from CBF-based safety filter design under state uncertainty. We first demonstrated that replacing the image of the state uncertainty under the CBF coefficient maps by its convex hull does not introduce conservativeness into the robust optimization problem. 
Furthermore, we established that when this convex hull is polytopic or ellipsoidal, our duality-based approach yields tractable, equivalent reformulations of the original problem. Through two case studies, we demonstrated that even when the convex hull is not of this form, bounding it with polytopic or ellipsoidal over-approximations effectively eliminates the need for Lipschitz constant estimates and provides a less conservative solution than prior methods.
Although the requirements for feasibility were discussed, guaranteeing the recursive feasibility of the closed-loop system remains an open problem and will be the subject of future research.

\bibliographystyle{IEEEtran}
\bibliography{IEEEabrv,references}

\end{document}